\documentclass[1p,final]{elsarticle}
\usepackage{amsfonts,natbib,pslatex}

\usepackage{latexsym,amsmath,amsthm,amssymb,amsxtra,mathrsfs,times,CJK}

\usepackage{color}
\usepackage{amsfonts,color}
\usepackage{amssymb,amsmath,latexsym}
\usepackage{amssymb}

\usepackage[para]{threeparttable}

\newcommand{\SQ}{{\mathrm{SQ}}}
\newcommand{\NQ}{{\mathrm{NSQ}}}

\newcommand{\tr}{{\rm Tr}}
\newcommand{\gf}{{\mathbb{GF}}}

\newcommand{\C}{{\mathcal C}}


\newtheorem{theorem}{Theorem}[section]
\newtheorem{lemma}[theorem]{Lemma}

\newtheorem{example}[theorem]{Example}

\journal{Finite Fields and Their Applications}

\begin{document}

\begin{frontmatter}



\title{A class of optimal ternary cyclic codes and their duals \tnoteref{fn1}}
\tnotetext[fn1]{C. Fan's research was supported by
the Natural Science Foundation of China, Proj. No. 11571285, Z. Zhou's research was supported by
the Natural Science Foundation of China, Proj. No. 61201243, the Sichuan Provincial Youth Science and Technology Fund under Grant
2015JQO004, and the Open Research Fund of National Mobile Communications Research Laboratory, Southeast University under Grant 2013D10.}


\author[SWJTU]{Cuiling Fan\corref{cor1}}
 \ead{cuilingfan@163.com}
\author[UB]{Nian Li}
 \ead{nianli.2010@gmail.com}
\author[SWJTU]{Zhengchun Zhou}
 \ead{zzc@home.swjtu.edu.cn,zczhou@126.com}
 \cortext[cor1]{Corresponding author}
 \address[SWJTU]{School of Mathematics, Southwest Jiaotong University, Chengdu, 610031, China}
\address[UB]{Department of Informatics, University of Bergen, N-5020 Bergen, Norway}
\begin{abstract}
Cyclic codes are a subclass of linear codes and have applications in consumer electronics,
data storage systems, and communication systems as they have efficient encoding and
decoding algorithms. Let $m=2\ell+1$ for an integer $\ell\geq 1$ and $\pi$ be a generator of $\gf(3^m)^*$.
In this paper, a class of cyclic codes $\C_{(u,v)}$ over $\gf(3)$
with two nonzeros $\pi^{u}$ and $\pi^{v}$ is studied, where $u=(3^m+1)/2$, and $v=2\cdot 3^{\ell}+1$
is the ternary Welch-type exponent.
Based on a result on the non-existence of solutions to certain equation  over $\gf(3^m)$,
the cyclic code $\C_{(u,v)}$ is shown to have minimal distance four, which is the best minimal distance for any linear code over $\gf(3)$
with length $3^m-1$ and dimension $3^m-1-2m$
according to the  Sphere Packing bound.
The duals of this class of cyclic codes are also studied.
\end{abstract}

\begin{keyword}
Cyclic code\sep optimal code \sep  sphere packing bound.
\MSC  94B15\sep 11T71

\end{keyword}

\end{frontmatter}

\begin{abstract}
Cyclic codes are a subclass of linear codes and have applications in consumer electronics,
data storage systems, and communication systems as they have efficient encoding and
decoding algorithms. In this paper, a class of three-weight cyclic codes over $\gf_{p}$
whose duals have two zeros is presented, where $p$ is an odd prime. The weight distribution
of this class of cyclic codes is settled. Some of the cyclic codes are optimal. The duals of a
subclass of the cyclic  codes are also studied and proved to be optimal.
\end{abstract}


\section{Introduction}

Let $p$ be a prime. An  $[n,k,d]$ linear code $\C$ over the finite field  $\gf(p)$ is a
$k$-dimensional subspace of $\gf(p)^n$ with minimum (Hamming)
distance $d$, and is called  cyclic if any cyclic shift of a codeword
is another codeword of $\C$. By identifying  $(c_0,c_1,\cdots,c_{n-1})\in \mathcal{C}$ with
 \[c_0+c_1x+c_2x^2+\cdots+c_{n-1}x^{n-1}\in\gf(p)[x]/(x^n-1),\]
 any cyclic code of length $n$ over $\gf(p)$ corresponds to an ideal of the polynomial residue class ring $\gf(p)[x]/(x^n-1)$.
 Note that every ideal of $\gf(p)[x]/(x^n-1)$ is principal. Thus, any cyclic code $\C$ can be expressed as $\C=\langle g(x) \rangle$, where $g(x)$ is monic and has the least degree. The polynomial  $g(x)$ is called the {generator polynomial} and $h(x)=(x^n-1)/g(x)$ is referred to as the { parity-check polynomial} of $\C$. The cyclic code $\mathcal{C}=\langle g(x) \rangle$ is said to have $t$ {nonzeros} if its parity-check polynomial $h(x)$ can be factorized as a product of $t$ distinct irreducible polynomials over $\gf(p)$ and accordingly the dual code $\C^{\perp}$ of $\C$ is said to have $t$ {zeros}.

Let $A_i$ denote the number of codewords with Hamming weight $i$ in a code $\C$ of length $n$ for $1\leq i\leq n$. The {weight enumerator} of $\C$ is defined by
 \[1+A_1z+A_2z^2+\cdots+A_nz^n,\]
 and the vector $(1, A_1, A_2, \ldots, A_n)$ is called the weight distribution of the code $\C$.  If $\C$ is a linear code, then the weight distribution of $\C$ gives the minimum distance and the error correcting capability of $\C$.
A code $\C$ is
said to be a $t$-weight code if the number of nonzero $A_i$
in the sequence $(A_1, A_2, \ldots, A_n)$ is equal to $t$ \cite{Lint}.
\vspace{2mm}

Cyclic codes are a subclass of linear codes and have important applications in consumer electronics,
data storage systems, and communication systems as they have efficient encoding and
decoding algorithms compared with the  linear block codes \cite{Klove}.
They also have applications in cryptography \cite{CDY05,DW05} and sequence design \cite{DYT}.
During the past few decades, cyclic codes have received a lot of attention and much progress have been made
(see \cite{CDY05}, \cite{Ding12}-\cite{Ding-Heelseth}, \cite{TFeng}, \cite{Li-CJ}-\cite{LiN}, \cite{Schmidt}, \cite{YangJing}, \cite{ZhengDB}-\cite{ZhouDingTC}  and the references therein).

Let $\gf(3^m)$ be the finite field with $3^m$ elements. Let $\pi$ be a generator of $\gf(3^m)^*$ and $m_{i}(x)$ be
the minimal polynomial of $\pi^i$ over $\gf(3)$, where $\gf(3^m)^*=\gf(3^m)\setminus\{0\}$ and $0\leq i\leq 3^m-2$.
Let $\C_{(u,v)}$ be the cyclic code over $\gf(3)$  with generator polynomial $m_u(x)m_v(x)$, where $u,v$ are two integers such that
$\pi^u$ and $\pi^v$ are nonconjugate. When $u=1$ and $v$ is an integer such that $x^v$ is a perfect nonlinear monomial,
Carlet, Ding, and Yuan \cite{CDY05} proved that
the code $\C_{(1,v)}$ has parameters $[3^m-1,3^m-1-2m,4]$ which is optimal according to the Sphere Packing bound.
Later, Ding and Helleseth \cite{Ding-Heelseth} constructed several classes of
optimal ternary cyclic codes $\C_{(1,v)}$ with parameters $[3^m-1,3^m-1-2m,4]$ using some monomials $x^v$ over $\gf(3^m)$ including almost perfect nonlinear monomials. Zhou and Ding obtained a class of optimal ternary cyclic codes $\C_{(u,v)}$ with parameters $[3^m-1,3^m-1-2m,4]$ by
choosing $(u,v)=((3^m+1)/2, (3^k+1)/2)$ where $m$ is odd and $k$ is even. Recently, Li {\it et al.} settled an open problem
proposed by Ding and Helleseth in \cite{Ding-Heelseth} and obtained some classes of optimal ternary
cyclic codes with parameters $[3^m-1,3^m-1-2m,4]$ and $[3^m-1,3^m-2-2m,5]$.
The duals of the aforementioned optimal ternary cyclic codes are discussed in \cite{FL07}, \cite{Li-CL}, \cite{YCD}, \cite{ZD13}, \cite{ZhouDingTC}.

The objective of this paper is to study a class of  ternary cyclic code $\C_{(u,v)}$  with generator polynomial $m_u(x)m_v(x)$, where $u=(3^m+1)/2$, and $v=2\cdot 3^{\ell}+1$ is the ternary Welch-type exponent proposed by Dobbertin {\it et al.} \cite{Dobb} where they studied the cross-correlation
between an $m$-sequence and its $v$-decimated version.
Based on a result on the non-existence of solutions to certain equation  over $\gf(3^m)$,
the cyclic code $\C_{(u,v)}$ is shown to have minimal distance four, which is the best minimal distance for such class of codes over $\gf(3)$. This family of cyclic codes is shown to
have parameters $[3^m-1,3^m-1-2m,4]$ and  is thus optimal according to the Sphere Packing bound. The duals of this class of cyclic codes are also studied.

\section{An equation over $\gf(3^m)$}\label{Pre}
In this section, we study an equation over $\gf(3^m)$, where $m=2\ell+1$.
The result on the non-existence of solutions to this equation
will be used to determine the minimal distance of a class of cyclic codes  in the sequel.

\begin{lemma}\label{Lemma-key1}
Let $m=2\ell+1$, where $\ell$ is a positive integer. Then for any given $\epsilon \in \gf(3)^*$, the equation
\begin{eqnarray}\label{eqn-main1}
(x^{3^\ell}+\epsilon)(x^{3^{\ell}}-x)=1
\end{eqnarray}
has no solution in $\gf(3^m)^*$.
\end{lemma}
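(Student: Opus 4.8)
The plan is to begin by removing the parameter $\epsilon$ through the affine substitution $z=x+\epsilon$. Since the Frobenius map $t\mapsto t^{3^\ell}$ is additive and fixes $\gf(3)$, one has $(x+\epsilon)^{3^\ell}=x^{3^\ell}+\epsilon$, so writing $q=3^\ell$ the left side of \eqref{eqn-main1} becomes $z^q(z^q-z)$ and the whole equation collapses to the single, $\epsilon$-free relation
\[
z^{2q}-z^{q+1}=1,\qquad q=3^\ell .
\]
It therefore suffices to show this has no root $z\in\gf(3^m)$. Note that $z\in\gf(3)$ is ruled out immediately, since then $z^q=z$ and the left side vanishes; in particular any solution has $z\neq 0$, so all its conjugates are nonzero.

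Next I would exploit the automorphism $\sigma:t\mapsto t^q$. Because $\gcd(\ell,m)=1$ it generates $\mathrm{Gal}(\gf(3^m)/\gf(3))$, and applying $\sigma^k$ to the equation (whose coefficients lie in $\gf(3)$) shows each conjugate $z_k:=z^{q^k}$ is again a solution, giving the cyclic relations $z_{k+1}(z_{k+1}-z_k)=1$ for $k\in\Z/m$. The hypothesis $m=2\ell+1$ now enters through a second family: since $q^{k+2}\cdot 3 = q^k\cdot 3^{2\ell+1}=q^k\cdot 3^m$, one gets $z_{k+2}^3=z_k$. Substituting $z_k=z_{k+1}-z_{k+1}^{-1}$ and $z_{k+1}=z_{k+2}-z_{k+2}^{-1}$ into $z_{k+2}^{3}=z_k$ and clearing denominators, using $-2=1$ and $3a^2=0$ in characteristic three, collapses everything to one fixed sextic satisfied by $a=z^{q^2}\in\gf(3^m)$:
\[
P(a):=a^6+a^4-1=0 .
\]

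The endgame is to analyze $P$. One verifies $P=(a^3+2a^2+a+1)(a^3+a^2+a+2)$ over $\gf(3)$ with both factors irreducible, so every root of $P$ lies in $\gf(27)$; hence $a\in\gf(27)\cap\gf(3^m)=\gf(3^{\gcd(3,m)})$. If $3\nmid m$ then $a\in\gf(3)$, contradicting that $P$ has no root in $\gf(3)$. If $3\mid m$ then $a\in\gf(27)$, so $z=a^3\in\gf(27)$; moreover $3\mid m=2\ell+1$ forces $\ell\equiv 1\pmod 3$, whence $z^q=z^3$ on $\gf(27)$ and the equation becomes the $\ell$-free sextic $z^6-z^4-1=0$. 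Checking that this has no root in $\gf(27)$ (it has no linear factor, and matching coefficients shows it is not a product of two cubics, hence has no cubic factor) completes the argument.

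The hard part is that the degree-six consequence $P(a)=0$ is genuinely weaker than the original equation: $P$ does have roots in $\gf(27)$, so one cannot finish by a naive ``every irreducible factor has even degree'' argument, and the telescoping identity $\sum_k z_{k+1}-z_k=0$ yields only the necessary condition $\tr(1/z)=0$, not a contradiction. The real content is thus concentrated in the case $3\mid m$, where the saving feature is that inside $\gf(27)$ the $\ell$-dependent equation reduces to a single fixed sextic over $\gf(3)$; the whole problem bottoms out in a finite verification in $\gf(27)$, and the delicate steps are performing the reduction $z^q=z^3$ correctly via $\ell\equiv 1\pmod 3$ and ruling out a cubic factor.
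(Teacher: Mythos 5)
Your proof is correct, but it takes a genuinely different route from the paper's. The two arguments share only the opening move: eliminating $\epsilon$ to reach the single equation $z^q(z^q-z)=1$, $q=3^\ell$ (the paper does this via $\theta=x^q+\epsilon$, you via $z=x+\epsilon$; the two are related by $\theta=z^q$). The paper then argues with quadratic residues: from $(\theta^2-1)^q=\theta^{q+1}$ it deduces that $\theta^2-1$ is a square in $\gf(3^m)^*$, and by combining the equation with its $3$-rd and $3^{\ell+1}$-th powers through an explicit identity it exhibits $1-\theta^2$ as a nonzero square as well --- a contradiction, since $-1$ is a nonsquare in $\gf(3^m)^*$ for odd $m$; this is uniform in $m$, with no case split and no computation. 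You instead run the Frobenius orbit: the conjugate relations $z_{k+1}(z_{k+1}-z_k)=1$, together with $z_{k+2}^3=z_k$ (the only place $m=2\ell+1$ enters), collapse to the fixed sextic $a^6+a^4-1=0$ over $\gf(3)$, which confines $a$ to $\gf(27)$ and forces the case split on $3\mid m$, settled by a finite verification. I checked your computations and they are all correct: the sextic, its factorization into the two irreducible cubics, and the final claim; in fact $z^6-z^4-1=(z^2+1)(z^4+z^2+2)$ over $\gf(3)$ with the quartic factor irreducible, which settles the $\gf(27)$ check more cleanly than the coefficient-matching you sketch. (Minor point: you never need that $\gcd(\ell,m)=1$ makes $t\mapsto t^q$ a generator of the Galois group --- only that the equation has coefficients in $\gf(3)$ and so is preserved by every power of Frobenius.) What each approach buys: yours is more elementary --- no quadratic characters, no appeal to $-1$ being a nonsquare --- and it localizes the whole problem explicitly in $\gf(27)$; the paper's buys brevity and uniformity in $m$ from that single parity fact.
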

\begin{proof}
It is clear that (\ref{eqn-main1}) holds if and only if
\begin{eqnarray}\label{eqn-sys-1}
\left\{
\begin{array}{llll}
x^{3^\ell}+\epsilon&=\theta\\
x^{3^\ell}-x&=\frac{1}{\theta}
\end{array} \right.\ \
\end{eqnarray}
holds for some $\theta\in \gf(3^m)^*$.
It is thus sufficient to prove that  (\ref{eqn-sys-1}) does not have solutions in $\gf(3^m)^*$ for any $\theta\in \gf(3^m)^*$.
Suppose on the contrary that (\ref{eqn-sys-1}) has a solution in $\gf(3^m)^*$ for some $\theta\in \gf(3^m)^*$, then $\theta$ should satisfy
\begin{eqnarray}\label{eqn-solution1}
\theta^{3^{\ell}}-\frac{1}{\theta^{3^{\ell}}}=\theta.
\end{eqnarray}
By (\ref{eqn-solution1}), we immediately have
$$
\left(\frac{\theta^2-1}{\theta}\right)^{3^\ell}=\theta,
$$
which means that
$$
(\theta^2-1)^{3^\ell}=\theta^{1+3^\ell}.
$$
Thus $\theta^2-1$ is a square in $\gf({3^m})^*$. On the other hand, (\ref{eqn-solution1}) holds for some $\theta\in \gf(3^m)^*$ if and only if
$$
\theta^{3^\ell}(\theta^{3^\ell}-\theta)=1.
$$
Recall that $m=2\ell+1$. Taking $3$ and $3^{\ell+1}$ powers on both sides of the equation above,
\begin{eqnarray}\label{eqn-basic1}
\theta^{3^{\ell+1}}(\theta^{3^{\ell+1}}-\theta^3)=1,
\end{eqnarray}
\begin{eqnarray}\label{eqn-basic2}
\theta(\theta-\theta^{3^{\ell+1}})=1.
\end{eqnarray}
Using (\ref{eqn-basic1}) and (\ref{eqn-basic2}), we obtain
\begin{eqnarray}\label{eqn-basic3}
\frac{\theta^{3^{\ell+1}}(\theta^{3^{\ell+1}}-\theta^3)+1}{(\theta(\theta-\theta^{3^{\ell+1}})+1)^2}-1=1.
\end{eqnarray}
Note that
\begin{eqnarray*}
&&\theta^{3^{\ell+1}}(\theta^{3^{\ell+1}}-\theta^3)+1-(\theta(\theta-\theta^{3^{\ell+1}})+1)^2\\
&=&(\theta^{3^{\ell+1}})^2-\theta^3\theta^{3^{\ell+1}}+1-(\theta^2(\theta-\theta^{3^{\ell+1}})^2+2\theta(\theta-\theta^{3^{\ell+1}})+1)\\
&=&((\theta^{3^{\ell+1}})^2+2\theta\theta^{3^{\ell+1}}+\theta^2)-(\theta^2(\theta-\theta^{3^{\ell+1}})^2+\theta^3\theta^{3^{\ell+1}})\\
&=&(\theta+\theta^{3^{\ell+1}})^2-\theta^2(\theta+\theta^{3^{\ell+1}})^2\\
&=&(\theta+\theta^{3^{\ell+1}})^2(1-\theta^2).
\end{eqnarray*}
Then (\ref{eqn-basic3}) becomes
\begin{eqnarray*}
\left(\frac{\theta+\theta^{3^{\ell+1}}}{\theta(\theta-\theta^{3^{\ell+1}})+1}\right)^2(1-\theta^2)=1.
\end{eqnarray*}
Therefore, $1-\theta^2$ is  a square in $\gf(3^m)^*$. This is a contradiction since $\theta^2-1$ is a square in $\gf(3^m)^*$ and $-1$ is a nonsquare in $\gf(3^m)^*$ if $m$ is odd. The proof of the lemma is finished.
\end{proof}

\section{A class of optimal ternary cyclic codes and their duals}
In this section, suppose that $(u,v)=((3^m+1)/2, 2\cdot 3^{\ell}+1)$ for an odd integer $m=2\ell+1\geq 3$.
We shall study the properties of the cyclic code $\C_{(u,v)}$ with two nonzeros $\pi^{u}$ and $\pi^v$  and its dual.

\subsection{The parameters of cyclic code $\C_{(u,v)}$}

The length of the cyclic code $\C_{(u,v)}$ is $3^m-1$, and its dimension is determined by sizes of the cyclotomic cosets modulo  $3^m-1$  containing $u$ and $v$. For $0\leq j\leq 3^m-2$. The cyclotomic coset modulo  $3^m-1$  containing $j$ is defined as
\[C_j=\{j\cdot3^s({\rm mod}\, 3^m-1): s=0,1,2,\cdots,m-1\}.\]

\begin{theorem}\label{thm-wb-1}
Let $m=2\ell+1$ and $(u,v)=((3^m+1)/2, 2\cdot 3^{\ell}+1)$. Then the code $\C_{(u,v)}$ is an optimal ternary cyclic code with parameters  $[3^m-1,3^m-1-2m,4]$.
\end{theorem}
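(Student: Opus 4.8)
The plan is to pin down the length, dimension, and minimum distance in turn, and then read off optimality from the Sphere Packing bound.

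First, the length is $n=3^m-1$ by construction, and the dimension equals $n-\deg(m_u(x)m_v(x))=n-(|C_u|+|C_v|)$ once we know $C_u\neq C_v$. I would compute $|C_u|=|C_v|=m$: since $m$ is odd one checks $\gcd(u,3^m-1)=2$, whence $|C_u|=\ord_{(3^m-1)/2}(3)=m$, while a short computation (writing $t=3^\ell$, so that $4(3t^2-1)\equiv -1\pmod{2t+1}$) gives $\gcd(v,3^m-1)=1$ and hence $|C_v|=\ord_{3^m-1}(3)=m$. To see $C_u\neq C_v$, note that $2u=3^m+1\equiv 2\pmod{3^m-1}$ forces $u\equiv 1\pmod{(3^m-1)/2}$, so every element of $C_u$ is congruent to $(3^m-1)/2+3^s$ and is therefore at least $(3^m-1)/2$; since $v=2\cdot 3^\ell+1<(3^m-1)/2$, we get $v\notin C_u$, and thus $\dim\C_{(u,v)}=3^m-1-2m$.

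Next I would bound the minimum distance from below by excluding codewords of weight $1,2,3$. A weight-$w$ codeword amounts to distinct $x_1,\dots,x_w\in\gf(3^m)^*$ and $a_1,\dots,a_w\in\gf(3)^*$ with $\sum_i a_ix_i^u=0$ and $\sum_i a_ix_i^v=0$; because $x^{2u}=x^{3^m+1}=x^2$ we may write $x^u=\eta(x)x$ with $\eta(x)=x^{(3^m-1)/2}\in\{1,-1\}$ the quadratic character. Weight $1$ is immediate. For weight $2$ the first equation forces $x_1/x_2\in\{1,-1\}$, hence $x_1/x_2=-1$ by distinctness; since $u$ is even (as $4\mid 3^m+1$ for odd $m$) while $v$ is odd, the two syndrome equations then demand opposite signs, a contradiction. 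Thus $d\geq 3$.

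The weight-$3$ case is the heart of the argument and the step I expect to be hardest. The key maneuver is a careful normalization: I would scale all three points by a common factor, choosing which point to send to $1$ so that the two remaining points $x$ and $y$ share the same quadratic character $s=\eta(x)=\eta(y)$ — this is possible because among $\eta(x_1),\eta(x_2),\eta(x_3)\in\{1,-1\}$ at least two agree, and one normalizes by the odd one out. The first syndrome equation then becomes $a_1+s(a_2x+a_3y)=0$, yielding a relation $y=cx+e$ with $c,e\in\gf(3)^*$. Substituting this together with $x^v=x(x^{3^\ell})^2$ and $(cx+e)^{3^\ell}=cx^{3^\ell}+e$ into the second syndrome equation, the leading term $x(x^{3^\ell})^2$ acquires a coefficient proportional to $\eta(x)-\eta(y)=0$ and cancels; collecting the surviving terms, the equation factors as $(x^{3^\ell}-x)(x^{3^\ell}-\mu)=s-1$ for some $\mu\in\{1,-1\}$. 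If $s=1$ the right-hand side is $0$, and either factor forces $x\in\{1,-1\}$ (using $\gcd(\ell,m)=1$ for the branch $x^{3^\ell}=x$), contradicting $x\neq 1$ together with $\eta(x)=1$. If $s=-1$ the right-hand side equals $s-1=1$, so the equation is exactly $(x^{3^\ell}+\epsilon)(x^{3^\ell}-x)=1$ with $\epsilon=-\mu\in\gf(3)^*$, which has no solution in $\gf(3^m)^*$ by Lemma~\ref{Lemma-key1}. Either way no weight-$3$ codeword exists, so $d\geq 4$. Finally, for parameters $[3^m-1,\,3^m-1-2m]$ over $\gf(3)$ the Sphere Packing bound forbids $d\geq 5$ (a two-error-correcting code would need $1+2n^2\leq 3^{2m}=(n+1)^2$, impossible for $n=3^m-1\geq 3$), so $d\leq 4$; combined with $d\geq 4$ this gives $d=4$ and the optimality of $\C_{(u,v)}$.
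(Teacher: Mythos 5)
Your proposal is correct, and its skeleton --- dimension from cyclotomic cosets, exclusion of weights $1,2,3$ with Lemma~\ref{Lemma-key1} as the key tool, then the Sphere Packing bound to cap $d$ at $4$ --- is the same as the paper's. The genuine difference is how you organize the weight-$3$ analysis. The paper normalizes by a fixed point, setting $y_1=x_1/x_3$, $y_2=x_2/x_3$, and then works through cases indexed by the coefficient pattern (its Cases A and B) and, inside Case A, four subcases indexed by the quadratic characters of $y_1,y_2$, several of which are dismissed as ``similar''. You instead normalize adaptively, dividing by the point whose quadratic character is the odd one out, so that the two surviving ratios $x,y$ satisfy $\eta(x)=\eta(y)=s$; the first syndrome equation then gives $y=cx+e$ with $c,e\in\gf(3)^*$, the coefficient of $x(x^{3^\ell})^2$ in the second equation is $a_2+a_3c=0$, and what remains factors (after multiplying by $a_3e$) as $(x^{3^\ell}-x)(x^{3^\ell}-ce)=s-1$. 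I checked this computation and it is right, including the identity $a_1a_3e=-s$ that produces the right-hand side $s-1$. This single identity replaces the paper's case grid: $s=1$ forces $x\in\gf(3)^*\setminus\{1\}=\{-1\}$, contradicting $\eta(x)=1$ since $-1$ is a nonsquare for odd $m$, and $s=-1$ is verbatim the equation of Lemma~\ref{Lemma-key1}. What your version buys is uniformity: all coefficient patterns are handled at once, nothing is left to ``similar'' arguments, and the role of the condition $\eta(x)=\eta(y)$ (it is exactly what kills the top-degree term) becomes visible; what the paper's version buys is that each subcase is a short direct manipulation of identities like $(1+y_1)^v=1+y_1^v$, at the cost of repetition and omitted cases. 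Your auxiliary details --- $\gcd(v,3^m-1)=1$ via $4(3t^2-1)\equiv-1\pmod{2t+1}$, the disjointness $C_u\cap C_v=\emptyset$ from $u=1+(3^m-1)/2$, and the reduction of the sphere-packing inequality to $1+2n^2\le(n+1)^2$ --- are also correct and in fact supply small verifications that the paper asserts without proof.
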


\begin{proof}
Note that $\gcd(u,3^m-1)=2$ and $\gcd(v,3^m-1)=1$ since $m$ is odd.
Then, it can be readily verified that $|C_u|=|C_v|=m$ and $C_u\cap C_v=\emptyset$. This implies that
the dimension of $\C_{(u,v)}$ is equal to $3^m-1-2m$.

We now prove that the minimal distance $d$ of $\C_{(u,v)}$ is equal to $4$.
To this end, we first prove $d\geq 4$.
It is clear that $d\geq 2$ since $c_i\pi^{iu}\neq 0$ for any $0\leq i\leq 3^m-2$ and $c_i\in \gf(3)^*$.
By the definition of $\C_{(u,v)}$, it has a codeword of Hamming weight $2$ if and only if there exist two
elements $c_1,c_2\in \gf(3)^*$ and two distinct integers $0\leq t_1<t_2\leq 3^m-2$ such that
\begin{eqnarray}\label{eqn_system_1}
\left\{
\begin{array}{lllll}
c_1\pi^{ut_1}&+&c_2\pi^{ut_2}&=&0\\
c_1\pi^{vt_1}&+&c_2\pi^{vt_2}&=&0.
\end{array} \right.\ \
\end{eqnarray}
Note that $\gcd(v,3^m-1)=1$. It follows from the second equation of
(\ref{eqn_system_1}) that $c_1=c_2$ and  $t_2=t_1+ {(3^m-1)/ 2}$ since $t_1\neq t_2$. Then the first equation becomes
$2c_1\pi^{ut_1}=0$, which is impossible. Thus  the
code $\C_{(u,v)}$ does not have a codeword of Hamming weight $2$. We continue the proof to show that $\C_{(u,v)}$ has no codewords of weight $3$. Otherwise,
there exist three elements $c_1,c_2,c_3$ in $\gf(3)^*$ and three distinct integers
$0\leq t_1<t_2<t_3\leq 3^m-2$ such that
 \begin{eqnarray}\label{eqn_system_2}
\left\{
\begin{array}{lllll}
c_1\pi^{ut_1}&+c_2\pi^{ut_2}&+c_3\pi^{ut_3} &=&0\\
c_1\pi^{vt_1}&+c_2\pi^{vt_2}&+c_3\pi^{vt_3} &=&0.
\end{array} \right.\ \
\end{eqnarray}
Let $x_i=\pi^{t_i}$ for $i=1,2,3$. Then $x_1,x_2,x_3\in \gf(3^m)^*$ and are distinct, and  (\ref{eqn_system_2}) becomes
 \begin{eqnarray}\label{eqn_system_21}
\left\{
\begin{array}{lllll}
c_1x_1^u&+c_2x_2^u&+c_3x_3^u &=&0\\
c_1x_1^v&+c_2x_2^v&+c_3x_3^v &=&0.
\end{array} \right.\ \
\end{eqnarray}
Let $y_1=x_1/x_3$ and $y_2=x_2/x_3$. It then follows from (\ref{eqn_system_21}) that
\begin{eqnarray}\label{eqn_system_22}
\left\{
\begin{array}{lllll}
c_1y_1^u&+c_2y_2^u&+c_3 &=&0\\
c_1y_1^v&+c_2y_2^v&+c_3 &=&0.
\end{array} \right.\ \
\end{eqnarray}
We only need to consider the solutions of (\ref{eqn_system_22}) for $y_1, y_2\in \gf(3^m)^*\setminus \{1\}$, since $x_1,x_2,x_3$ are pairwise distinct. Due to symmetry it is sufficient to consider the following two cases.

{\textit{Case A}}: when $c_1=c_2=c_3=1$: In this case, we have
 \begin{eqnarray}\label{eqn_system_3}
\left\{
\begin{array}{lllll}
y_1^u&+y_2^u&+1 &=&0\\
y_1^v&+y_2^v&+1 &=&0.
\end{array} \right.\ \
\end{eqnarray}
Recall that $u=(3^m+1)/2$. We have $y^u=y$ if $y$ is a square in $\gf(3^m)^*$ and otherwise $y^u=-y$.
We distinguish among the following four cases to  prove that (\ref{eqn_system_3}) cannot
hold for any  $y_1, y_2\in \gf(3^m)^*\setminus \{1\}$.

(1) $y_1,y_2$ are squares in $\gf(3^m)^*$.
In this subcase, (\ref{eqn_system_3}) becomes
 \begin{eqnarray*}
\left\{
\begin{array}{lll}
y_1+y_2+1&=&0\\
y_1^v+y_2^v+1 &=&0
\end{array} \right.\ \
\end{eqnarray*}
which leads to
\begin{eqnarray}\label{eqn-final-1}
(1+y_1)^v=1+y_1^v.
\end{eqnarray}
Notice that
$$
(1+y_1)^{v}=1+y_1^v+y_1^{2\cdot 3^\ell}-y_1^{3^{\ell}+1}-y_1^{3^{\ell}}+y_1.
$$
This together with (\ref{eqn-final-1}) yields
\begin{eqnarray*}
(y_1^{3^\ell}-y_1)(y_1^{3^{\ell}}-1)=0
\end{eqnarray*}
which implies that $y_1\in \gf(3)$ since $\gcd(m,\ell)=\gcd(2\ell+1,\ell)=1$.
Thus $y_1=-1$ since $y_1\neq 0$ and $y_1\neq 1$.
This is a contradiction with the assumption that $y_1$ is a square in $\gf(3^m)^*$.

(2) $y_1$ is a square in $\gf(3^m)^*$ and $y_2$ is a nonsquare in $\gf(3^m)^*$.
By a similar routine calculation as case (1), we arrive at
$$
(y_1^{3^\ell}-y_1)(y_1^{3^\ell}-1)=-(1+y_1)^{2\cdot {3^\ell+1}}.
$$
Set $z_1=1+y_1$. Then we have
$$
(z_1^{3^\ell}-z_1)(z_1^{3^\ell}+1)=-z_1^{2\cdot {3^\ell+1}}.
$$
Let $\tilde{z}_1=z_1^{-1}$. Then dividing by $z_1^{2\cdot {3^\ell+1}}$ on both sides of the equation above
gives
\begin{eqnarray}\label{eqn-final-2}
(\tilde{z}_1^{3^\ell}-\tilde{z}_1)(\tilde{z}_1^{3^\ell}+1)=1.
\end{eqnarray}
By Lemma \ref{Lemma-key1}, (\ref{eqn-final-2}) cannot
hold for any $\tilde{z}_1\in \gf(3^m)^*$.

(3) $y_1$ is a nonsquare in $\gf(3^m)^*$ and $y_2$ is a  nonsquare in $\gf(3^m)^*$.
This case is similar to case (2).

(4) $y_1$ and $y_2$ are nonsquares in $\gf(3^m)^*$.
With a similar routine calculation as case (1), we have
$$
(y_1^{3^\ell}+1)(y_1^{3^\ell}-y_1)=1.
$$
This equation has no solutions in $\gf(3^m)^*$
due to Lemma \ref{Lemma-key1}.

{\textit{Case B}}: when $c_1=c_2=1$ and $c_3=-1$. The proof of this case is similar to Case A. We omit the details here.

The discussion above shows that $d\geq 4$. On the other hand, according to the Sphere Packing bound (c.f., p. 48, \cite{Huff}), the minimal distance of any
linear code with length $3^m-1$ and dimension $3^m-1-2m$ should be less than or equal $4$.
Hence $d=4$. This completes the proof.
\end{proof}

\subsection{The weights of the dual of $\C_{(u,v)}$}
In this section, we shall determine all the possible Hamming weights of
the duals of $\C_{(u,v)}$.  Using Delsarte's
Theorem \cite{Delsarte}, the dual of $\C_{(u,v)}$  is given by
\begin{eqnarray*}
\C^\bot_{(u,v)}=\{{\bf{c}}{(a,b)}:  a,b \in \gf({3^m})^2\}
\end{eqnarray*}
where the codeword
\begin{eqnarray*}
{\bf{c}}(a,b)=\left(\tr(a \pi^{-ui}+b \pi^{-vi})\right)_{i=0}^{3^m-2}
\end{eqnarray*}
and $\tr$ denotes the absolute trace from $\gf(3^m)$ to $\gf(3)$.

\begin{theorem}
The weight of the codeword ${\bf{c}}(a,b)$in $\C^\bot_{(u,v)}$ is $0$ if $a=b=0$, and otherwise takes values from
$$
\{2\cdot 3^{m-1},~ 2\cdot 3^{m-1}\pm 2\cdot 3^{\ell}, ~2\cdot 3^{m-1}\pm 3^{\ell}\}.
$$
\end{theorem}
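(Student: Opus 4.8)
The plan is to convert the Hamming weight of the codeword $\bc(a,b)$ into an additive-character sum over $\gf(3^m)$ and then, using the special shapes of the two exponents $u=(3^m+1)/2$ and $v=2\cdot 3^\ell+1$, to reduce that sum to the cross-correlation spectrum of the ternary Welch-type exponent, which is already known from \cite{Dobb}. Throughout, let $\omega=e^{2\pi i/3}$, let $\psi(z)=\omega^{\tr(z)}$ be the canonical additive character of $\gf(3^m)$, and let $\eta$ be the quadratic character of $\gf(3^m)^*$.

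First I would rewrite the weight as an exponential sum. Counting the coordinates $i$ with $\tr(a\pi^{-ui}+b\pi^{-vi})=0$ by means of the orthogonality relation $\sum_{y\in\gf(3)}\omega^{y\tr(w)}=3$ if $\tr(w)=0$ and $0$ otherwise, and letting $x=\pi^{-i}$ run over $\gf(3^m)^*$, one gets
\[
\wt(\bc(a,b))=2\cdot 3^{m-1}-\tfrac{2}{3}\,\mathrm{Re}\,S(a,b),\qquad S(a,b):=\sum_{x\in\gf(3^m)}\psi\!\left(ax^u+bx^v\right),
\]
where I used $S(-a,-b)=\overline{S(a,b)}$. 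The case $a=b=0$ gives $S=3^m$ and weight $0$, as claimed, so it remains to pin down the possible values of $\mathrm{Re}\,S(a,b)$ for $(a,b)\neq(0,0)$.

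Next I would use $x^u=\eta(x)\,x$ for $x\neq 0$. Since $-1$ is a nonsquare ($m$ odd) and $v$ is odd, the substitution $x\mapsto -x$ carries the nonsquare part of the sum onto the squares and yields
\[
S(a,b)=1+\sum_{\eta(x)=1}\bigl[\psi(ax+bx^v)+\psi(ax-bx^v)\bigr].
\]
Expanding the indicator of the nonzero squares as $\tfrac12(1+\eta(x))$ splits $S$ into an untwisted Weil part built from $W(a,b):=\sum_{x\in\gf(3^m)}\psi(ax+bx^v)$ and a twisted part built from $W_\eta(a,b):=\sum_{x\neq 0}\eta(x)\psi(ax+bx^v)$. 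The decisive point — and the reason the extra exponent $u$ is harmless — is a conjugation argument under $x\mapsto -x$: because $v$ is odd, $W(a,b)$ is real, whereas because $\eta(-1)=-1$, the sum $W_\eta(a,b)$ is purely imaginary. Consequently the Gauss-sum contributions cancel in the real part and $\mathrm{Re}\,S(a,b)=\tfrac12\bigl(W(a,b)+W(-a,b)\bigr)$.

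It then remains to evaluate $W(a,b)$. When $a=0$ or $b=0$ the bijectivity of $x\mapsto x^v$ (valid since $\gcd(v,3^m-1)=1$) forces $W=0$. For $a,b\neq 0$, rescaling $x$ to normalize $b=1$ turns $W(a,b)$ into $1+C_v(\tau)$, where $C_v$ is the cross-correlation function attached to the ternary Welch exponent $v$; by the theorem of Dobbertin {\it et al.} \cite{Dobb} this function is three-valued with values $\{-1,-1\pm 3^{\ell+1}\}$, so $W(a,b)\in\{0,\pm 3^{\ell+1}\}$. Averaging two such values gives $\mathrm{Re}\,S(a,b)\in\{0,\pm 3^{\ell+1}/2,\pm 3^{\ell+1}\}$, and feeding this into the weight formula produces exactly $\{2\cdot 3^{m-1},\,2\cdot 3^{m-1}\pm 3^\ell,\,2\cdot 3^{m-1}\pm 2\cdot 3^\ell\}$. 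The main obstacle is this last evaluation, i.e.\ the three-valuedness of the Welch cross-correlation; rather than reprove that deep fact I would import it from \cite{Dobb}. The genuinely new ingredient is the reduction above, whose crux is the parity argument certifying that the quadratic-character (Gauss sum) terms introduced by $u=(3^m+1)/2$ are purely imaginary and thus invisible to the Hamming weight.
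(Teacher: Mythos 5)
Your proposal is correct and takes essentially the same route as the paper: both reduce the weight to the pair of complete Weil sums $\sum_{x\in\gf(3^m)}\chi(ax+bx^v)$ and $\sum_{x\in\gf(3^m)}\chi(-ax+bx^v)$ --- your real-part/quadratic-character bookkeeping (the ``twisted sums are purely imaginary'' step) is exactly the paper's summation over $y\in\gf(3)^*$ combined with the substitution $x\mapsto -x$ on the square/nonsquare pieces, exploiting that $v$ is odd and $-1$ is a nonsquare --- and both then import the three-valued spectrum $\{0,\pm 3^{\ell+1}\}$ of the Welch-type exponent from Dobbertin \emph{et al.} The differences are purely organizational, so no further comparison is needed.
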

\begin{proof}
Let $\chi_1$ and $\chi$ denote the canonical additive character of $\gf(3)$ and $\gf(3^m)$, respectively.
In terms of exponential sums, the weight of the codeword ${\bf c}_{(a,b)}=(c_0,c_1,\cdots,c_{3^m-2})$ in $\C^\bot_{(u,v)}$
is given by
\begin{eqnarray}\label{eqn-hammingweight1}
{\textrm{WT}}({\bf c}{(a,b)})&=& \#\{0\leq i\leq 3^m-2: c_i \neq 0\}\nonumber\\
&=& 3^m-1-\frac{1}{3}\sum_{i=0}^{3^m-2}\sum_{y\in {\gf({3}})}\chi_1({y c_i})\nonumber\\
&=& 3^m-1-\frac{1}{3}\sum_{i=0}^{3^m-2}\sum_{y\in {\gf({3}})}\chi_1({y (\tr(a\pi^{-ui})+\tr(b\pi^{-vi}))})\nonumber\\
&=&3^m-1-\frac{1}{3}\sum_{x\in \gf(3^m)^*}\sum_{y\in {\gf({3}})}\chi(ayx^u+byx^v)\nonumber\\
&=&2\cdot 3^{m-1}-\frac{1}{3}\sum_{y\in \gf(3)^*}\sum_{x\in {\gf({3^m}})}\chi(ayx^u+byx^v).
\end{eqnarray}
Note that $x^u=x$ if $x$ is a square in $\gf(3^m)$ and otherwise $x^u=-x$. It then follows from (\ref{eqn-hammingweight1}) that
\begin{eqnarray*}
{\textrm{WT}}({\bf c}{(a,b)})=2\cdot 3^{m-1}-\frac{1}{3}\sum_{y\in \gf(3)^*}(1+\sum_{x\in \SQ}\chi(ayx+byx^v)+\sum_{x\in \NQ}\chi(-ayx+byx^v)),
\end{eqnarray*}
which leads to
\begin{eqnarray}\label{eqn-hammingweight2}
&& 3{\textrm{WT}}({\bf c}{(a,b)})-2\cdot 3^{m}\nonumber\\
&=&2+\sum_{x\in \SQ}\chi(ax+bx^v)+\sum_{x\in \SQ}\chi(-ax-bx^v)+\sum_{x\in \NQ}\chi(-ax+bx^v)+\sum_{x\in \NQ}\chi(ax-bx^v)\nonumber\\
&=&2+\sum_{x\in \SQ}\chi(ax+bx^v)+\sum_{x\in \NQ}\chi(ax+bx^v)+\sum_{x\in \NQ}\chi(-ax+bx^v)+\sum_{x\in \SQ}\chi(-ax+bx^v)\nonumber\\
&=&\sum_{x\in \gf(3^m)}\chi(ax+bx^v)+\sum_{x\in \gf(3^m)}\chi(-ax+bx^v),
\end{eqnarray}
where $\SQ$ and $\NQ$ denote the set of all squares and nonsquares in $\gf(3^m)^*$ respectively, and the third equality
follows from the fact that $-x$ runs though $\NQ$ or $\SQ$ as $x$ ranges over $\SQ$ or $\NQ$ respectively.

We distinguish among the following three cases to determine all the possible weights of the codeword ${\bf c}{(a,b)}$.

{\textit{Case A}}: when $a=b=0$: It is clear that ${\textrm{WT}}({\bf c}{(a,b)})=0$ in this case.

{\textit{Case B}}: when $a=0$ or $b=0$: In this case, by (\ref{eqn-hammingweight2}) and the fact that $x^v$ is
a permutation over $\gf(3^m)$, we immediate obtain ${\textrm{WT}}({\bf c}{(a,b)})=2\cdot 3^{m-1}$.

{\textit{Case C}}: when $a\neq 0$ and $b\neq 0$: It follows from (\ref{eqn-hammingweight2})
that
\begin{eqnarray}\label{eqn-hammingweight3}
{\textrm{WT}}({\bf c}{(a,b)})=2\cdot 3^{m-1}-\frac{1}{3}(\hat{f}_v(\lambda)+\hat{f}_v(-\lambda))
\end{eqnarray}
where
$$
\hat{f}_v(\lambda)=\sum_{x\in \gf(3^m)}\chi(x^v-\lambda x)
$$
is the fourier transform of the power function $x^v$ at the point $\lambda=ab^{-1}$. It has been proven in \cite{Dobb}
that $\hat{f}_v(\lambda)\in \{0, \pm 3^{\ell+1}\}$ for each $\lambda \in \gf(3^m)^*$. This together
with (\ref{eqn-hammingweight3}) implies that ${\textrm{WT}}({\bf c}{(a,b)})$ takes values from the set
$\{2\cdot 3^{m-1},~ 2\cdot 3^{m-1}\pm 2\cdot 3^{\ell}, ~2\cdot 3^{m-1}\pm 3^{\ell}\}.$

The discussion above finishes the proof of this theorem.
\end{proof}

\subsection{Examples}

The following are some examples for this class of cyclic codes and their duals, which are generated by a Magma program.

\begin{example}
Let $p=3$, $m=5$ and $\pi$ be a generator of $\gf(3^m)^*$ with minimal polynomial $x^5+2x+1$. Then the code $\C_{(u,v)}$ is an optimal ternary cyclic
code with  generator polynomial $x^{10} + x^9 + 2x^8 + 2x^6 + 2x^5 + x^3 + 2x + 2$ and parameters
$[242,232,4]$.
The dual of the code  $\C_{(u,v)}$ has the following weight enumerator:
\begin{eqnarray*}
1+2420 z^{144}+12100 z^{153}+34364 z^{162}+7744 z^{171}+2420 z^{180}.
\end{eqnarray*}
\end{example}

\begin{example}
Let $p=3$, $m=7$ and $\pi$ be a generator of $\gf(3^m)^*$ with minimal polynomial $x^7+2x^2+1$. Then the code $\C_{(u,v)}$ is an optimal ternary cyclic
code with generator polynomial $ x^{14} + 2x^{12} + x^{10} + x^9 + 2x^8 + 2x^7 + 2x^5 + x^3 +x^2 + x+ 2$ and parameters
$[2186,2172,4]$.
The dual of the code  $\C_{(u,v)}$ has the following weight enumerator:
\begin{eqnarray*}
1+153020 z^{1404}+1040536 z^{1431}+2513900 z^{1458}+922492 z^{1485}+153020 z^{1512}.
\end{eqnarray*}
\end{example}

\begin{example}
Let $p=3$, $m=9$ and $\pi$ be a generator of $\gf(3^m)^*$ with minimal polynomial $x^9+2x^3+2x^2+x+1$. Then the code $\C_{(u,v)}$  is an optimal ternary cyclic
code with generator polynomial $x^{18} + 2x^{17} + x^{16} + x^{14} + x^{12} + 2x^{11} + 2x^{10}+
    x^8 +  2x^6 + x^5+2x^4 + x^2 + x + 2$ and parameters
$[19682,19664,4]$.
The dual of the code  $\C_{(u,v)}$ has the following weight enumerator:
\begin{eqnarray*}
1+10628280 z^{12960}+88214724 z^{13041}+192922964 z^{13122}+85026240 z^{13203}+10628280 z^{13284}.
\end{eqnarray*}
\end{example}

\section{Concluding Remarks}

In this paper, a class of ternary cyclic codes and their duals were
studied. Based on a result on the non-existence of solutions to an equation over $\gf(3^m)$, it was shown that this class of cyclic codes has parameters $[3^m-1,3^m-1-2m,4]$ for an odd integer $m$ and thus is optimal with respect to certain bound on general linear code.
The duals of this class of cyclic codes were shown to have at most five nonzero weights.  It would be interesting
if the weight distribution of the duals could be completely determined. The reader is invited to address this adventure.

\section*{Acknowledgments}
The authors are very grateful to the reviewer and the
Associate Editor, Prof. James W.P. Hirschfeld, for their comments and
suggestions that improved the presentation and quality of this
paper. This work was finished when the authors visited the Hong Kong University of Science and Technology. The authors
are grateful to Professor Cunsheng Ding for bringing them together
in the summer of 2014.

\end{document}